\title{\textbf{Structure of Gauge-Invariant Lagrangians}}
\author{\textsc{Marco Castrill\'on L\'opez}
\and \textsc{Jaime Mu\~{n}oz Masqu\'e}
\and \textsc{Eugenia Rosado Mar\'{\i}a}}
\date{}
\newtheorem{theorem}{Theorem}[section]
\theoremstyle{remark}
\newtheorem{remark}[theorem]{Remark}
\newtheorem{example}[theorem]{Example}
\begin{document}

\maketitle

\begin{abstract}
\noindent The theory of gauge fields in Theoretical Physics poses several
mathematical problems of interest in Differential Geometry and in Field Theory.
Below we tackle one of these problems: The existence of a finite system
of generators of gauge-invariant Lagrangians and how to compute them.
More precisely, if $p\colon C\to M$ is the bundle of connections on a principal
$G$-bundle $\pi\colon P\to M$, then a finite number $L_1,\dotsc,L_{N^\prime }$
of gauge-invariant Lagrangians defined on $J^1C$
is proved to exist such that for any other gauge-invariant Lagrangian
$L\in C^\infty (J^1C)$ there exists a function
$F\in C^\infty (\mathbb{R}^{N^\prime })$ such that $L=F(L_1,\dotsc,L_{N^\prime})$.
Several examples are dealt with explicitly.
\end{abstract}

\medskip

\noindent\emph{PACS numbers:\/} 2.20.Sv Lie algebras of Lie groups;
02.30.Ik Integrable systems;
02.40.-k Geometry, differential geometry, and topology;
03.50. z Classical field theories;
11.10.Ef Lagrangian and Hamiltonian approach;
11.15.-q Gauge field theories

\medskip

\noindent\emph{Mathematics Subject Classification 2000:\/} Primary 35F20;
Secondary 53C05,
58A20,
58D19,
58E15,
58E30,
81T13.

\medskip

\noindent\emph{Key words and phrases:} Bundle of connections, gauge
invariance, jet bundles, curvature mapping, functionally independent
gauge-invariant Lagrangians, structure of Lie algebras

\medskip

\noindent\emph{Acknowledgments:\/}
\textexclamdown \textexclamdown \textexclamdown \ Incluir proyecto quien lo
tenga !!!

\section{Introduction}
The notion of gauge invariance---defined as invariance under the group of
base-preserving automorphisms of a principal fibre $G$-bundle $\pi\colon
P\rightarrow M$ over an oriented space-time---is fundamental in the theory of
gauge fields and their associated fields, such as Yang-Mills-Higgs fields; for
example, see the classical expositions \cite{MM} or \cite{Traut}.

Below we are concerned with gauge invariance of the variational problems
determined by a free (i.e., without any interaction term with a particle
field) gauge invariant Lagrangian function defined on the fibre bundle
$p\colon C\rightarrow M$ of connections on $\pi\colon P\rightarrow M$.

The fundamental step in determining such Lagrangians, is the so-called
geometric formulation of Utiyama's theorem (see \cite[10.2.15 Theorem]{Bl}),
according to which a Lagrangian $L$ of first order on $C$ is gauge invariant
if and only if $L$ factors through the curvature map by means of a zero-order
Lagrangian on the vector bundle of differential $2$-forms on $M$ with values
in the adjoint bundle of $P$ (also called \textquotedblleft the curvature
bundle\textquotedblright), which, in turn, must be invariant under the natural
representation of the gauge group on that bundle. See \cite{Be} for the
generalization of Utiyama's theorem to Lagrangians for gauge-particle field interaction.

This reduces the problem of determining gauge-invariant Lagrangians to the
problem of determining the zero-order gauge invariant Lagrangians defined on
the curvature bundle. If $G$ is connected, then the second problem can
infinitesimally be solved by proving that zero-order gauge invariant
Lagrangians on the curvature bundle are the first integrals of an involutive
distribution $\mathcal{D}$.

In this paper, we prove that $\mathcal{D}$\ is of constant rank on a dense
open subset and we compute this rank. If $\mathfrak{g}$ is the Lie algebra of
$G$, and $n=\dim M$, $m=\dim\mathfrak{g}$, $l=\operatorname*{rank}%
\mathfrak{g}$, then we obtain the following results (see \S 3.2 below):

1st) If $n=2$, then the generic rank of $\mathcal{D}$ equals $m-l$, and

2nd) If $n\geq3$, then the generic rank is $m$.

The result for $n=2$ explains why the theory of Yang-Mills fields on a surface
presents special features. In fact, according to a classical theorem by
Chevalley (e.g., see \cite[Theorem 4.9.3]{Var}) there exist $l$ homogeneous
algebraically independent polynomials $p_{1},\ldots,p_{l}$ such that the
algebra of polynomial functions on $\mathfrak{g}$ that are invariant with
respect to the adjoint representation of $G$ on $\mathfrak{g}$, is isomorphic
to the algebra of polynomials in $p_{1},\ldots,p_{l}$, thus providing a basis
with geometric meaning for the algebra of first integrals of $\mathcal{D}$, as
stated in Remark 4.1.

Next, for $n\geq3$ we also obtain a basis of first integrals spanning
differentiably the ring of zero-order Lagrangians. Assuming $G$ connected and
semisimple, in the local case such a basis is deduced from Hilbert-Nagata
theorem (see Theorem 3.2).

Finally, we include several worked examples in low dimensions illustrating the
previous general results.
\section{Preliminaries}
\subsection{Jet prolongation}
Let $p\colon E\to M$ be a fibred manifold over an orientable connected
smooth manifold oriented by a volume form $\mathbf{v}$. A pair of
diffeomorphisms $\phi \in \mathrm{Diff}M$, $\Phi \in \mathrm{Diff}E$,
such that $p\circ \Phi =\phi \circ p$, is said to be an automorphism
of $p$; the group of all automorphisms is denoted by $\mathrm{Aut}E$.

If $Y\in \mathfrak{X}(E)$ then there exists a unique vector field
$Y^{(1)}$ in $\mathfrak{X}(J^1E)$---the $1$-jet prolongation
of $Y$ to the first-order jet bundle $J^1E$---projectable onto
$Y$ such that $\mathcal{L}_{Y^{(1)}}$ keeps invariant
the module of contact $1$-forms spanned by the following forms
$\theta ^\alpha =dy^\alpha -y_i^\alpha dx^i$, $1\leq \alpha\leq m$
on $\Omega^1(J^1E)$, where $n=\dim M$, $m+n=\dim E$, and $(x^i,y^\alpha )$,
$1\leq \alpha \leq m$, $1\leq i\leq n$, is a fibred
coordinate system for $p\colon E\to M$ and $(x^i,y^\alpha ,y_i^\alpha )$
is the induced coordinate system on $J^1E$.

The Lie algebra
of $\mathrm{Aut}E$ is the Lie subalgebra of $p$-projectable vector
fields $\mathfrak{X}_p(E)\subset \mathfrak{X}(E)$, namely, if $\Phi _t$
is the local flow of $Y\in \mathfrak{X}(E)$, then $\Phi_t \in \mathrm{Aut}E$,
$\forall t$, if and only if $Y\in \mathfrak{X}_p(E)$; in this case,
$\Phi _t^{(1)}$ is the flow of $Y^{(1)}$. If $Y$ is a $p$-vertical vector field,
then the formulas of $1$-jet prolongation are as follows:
\begin{equation}
\begin{array}
[c]{ll}
Y=v^\alpha \frac{\partial }{\partial y^\alpha }, & v^\alpha \in C^\infty (E),\\
Y^{(1)}=v^\alpha \frac{\partial }{\partial y^\alpha }
+v_i^\alpha \frac{\partial }{\partial y_i^\alpha }, & v_i^\alpha
=\frac{\partial v^\alpha }{\partial x^i}
+\frac{\partial v^\alpha }{\partial y^\beta }y_i^\beta ,
\end{array}
\label{i.c.t.}
\end{equation}
\subsection{$\mathrm{aut}P$ and $\mathrm{gau}P$\label{aut and Aut}}
Let $G$ be a Lie group. An automorphism of a principal $G$-bundle
$\pi\colon P\to M$ is a $G$-equivariant diffeomorphism $\Phi \colon P\to P$.
The group of all automorphisms of $P$ is denoted by $\mathrm{Aut}P$. Every
$\Phi \in \mathrm{Aut}P$ determines a unique diffeomorphism
$\phi \colon M\to M$, such that $\pi \circ \Phi =\phi \circ \pi $.
If $\phi $ is the identity map on $M$, then $\Phi $ is said to be
a gauge transformation (cf.\ \cite[3.2.1]{Bl}); the subgroup
of all gauge transformations is denoted by $\mathrm{Gau}P\subset \mathrm{Aut}P$.

A vector field $X\in \mathfrak{X}\left( P\right) $ is said to be
$G$-invariant if $R_{g}\cdot X=X$, $\forall g\in G$; if $\Phi_{t}$ is the flow
of $X$, then $X$ is $G$-invariant if and only if $\Phi_{t}\in\mathrm{Aut}P$,
$\forall t\in\mathbb{R}$. The Lie subalgebra of $G$-invariant vector fields on
$P$ is denoted by $\mathrm{aut}P\subset\mathfrak{X}\left( P\right) $. Each
$G$-invariant vector field on $P$ is $\pi $-projectable.

Similarly, a $\pi $-vertical vector field $X\in\mathfrak{X}\left( P\right) $
is $G$-invariant if and only if $\Phi _t\in \mathrm{Gau}P$, $\forall
t\in \mathbb{R}$. Let $\mathrm{gau}P\subset\mathrm{aut}P$ be the ideal of all
$\pi $-vertical $G$-invariant vector fields on $P$, which is usually called the
gauge algebra of $P$.

The quotient $T(P)/G$ exists as a differentiable manifold and it is endowed
with a vector bundle structure over $M$ (see \cite{Atiyah}), whose global
sections can naturally be identified to $\mathrm{aut}P$; i.e.,
$\mathrm{aut}P\cong \Gamma \left( M,T(P)/G\right) $.

If $G$ acts on the left on a manifold $F$ via a map $G\times F\to F$,
$(g,y)\mapsto g\star y$, then $G$ acts on the right
on the product $P\times F$\ by setting
$(u,y)\cdot g=(u\cdot g,g^{-1}\star y)$, $\forall g\in G$,
$\forall u\in P$, $\forall y\in F$.
The quotient manifold $(P\times F)/G$ of this action exists and it defines
a fibre bundle $\pi _F\colon P\times ^GF\to M$,
$\pi _F((u,y)\operatorname{mod}G)=\pi (u)$, called the bundle associated
to $P$ by the action on $F$; e.g., see \cite[\S 3.1]{Bl}, \cite[\S 35]{GS},
\cite[p.54]{KN}.

Every $\phi \in \mathrm{Aut}P$ induces a diffeomorphism
$\Phi _F\colon P\times ^GF\to P\times ^GF$ by setting
$\Phi_{F}((u,y)\operatorname{mod}G)=(\Phi (u),y)\operatorname{mod}G$,
$\forall u\in P$, $\forall y\in F$. If $\phi \in \mathrm{Diff}M$
is the diffeomorphism determined by $\Phi $, then
$\pi _F\circ \Phi _F=\phi \circ \pi _F$.

Smooth sections $s\colon M\to P\times ^GF$ of $\pi _F\colon P\times ^GF\to M$
are in one-to-one correspondence with $G$-equivariant smooth maps
$\hat{s}\colon P\to F$, which means in this case that
$\hat{s}(u\cdot g)=g^{-1}\star \hat{s}(u)$, $\forall u\in P$, $\forall g\in G$,
see \cite[Proposition 35.1]{GS}.

If $F$\ is a Lie group and every left translation $L_g\colon F\to F$
of the action of $G$\ on $F$\ is an automorphism, then the fibres of
$\pi _F\colon P\times ^GF\to M$\ are endowed with a structure of Lie
group isomorphic to $F$.\ In fact, we can represent two given points
$f,f^\prime \in \pi _F^{-1}(x)$ as $f=(u,y)\operatorname{mod}G$,
$f^\prime =(u,y^\prime )\operatorname{mod}G$, with the same $u\in P$,
as $G$ acts transitively on $\pi ^{-1}(x)$, and thus the operation given
by $f\cdot f^\prime =(u,y\cdot y^\prime )\operatorname{mod}G$, makes sense.
Hence $\pi _F$\ is Lie-group fibre bundle.

In particular, if $G$ acts on itself by conjugation, $G\times G\to G$,
$(g,x)\to gxg^{-1}$, $\forall g,x\in G$, then the associated bundle
$\pi _G\colon \mathrm{Ad}P=P\times ^GG\to M$ is a Lie-group fibre
bundle. Consequently, smooth sections
$\Gamma (M,\mathrm{Ad}P)$ of $\pi _G$ admit a group structure given by
$(s\cdot s^\prime )(x)=s(x)\cdot s^\prime (x)$, for all
$s,s^\prime \in \Gamma (M,\mathrm{Ad}P)$, $x\in M$, and the group
$\mathrm{Gau}P$ is isomorphic to $\Gamma(M,\mathrm{Ad}P)$, see
\cite[Proposition 35.2]{GS}.

Similarly $\mathrm{gau}P\cong \Gamma \left( M,\mathrm{ad}P\right) $,
where $\pi _{\mathfrak{g}}\colon \mathrm{ad}P\to M$ denotes the adjoint
bundle, i.e., the bundle associated to $P$ by the adjoint representation
of $G$ on its Lie algebra $\mathfrak{g}$; namely $\mathrm{ad}P
=\left( P\times \mathfrak{g}\right) /G$, the action of $G$
on $P\times \mathfrak{g}$ being defined by $\left( u,A\right) \cdot g
=(u\cdot g,\mathrm{Ad}_{g^{-1}}\left( A\right) )$,
$\forall u\in P$, $\forall A\in \mathfrak{g}$, $\forall g\in G$.
The $G$-orbit of $\left( u,A\right) \in P\times\mathfrak{g}$ in
$\mathrm{ad}P$ is denoted by $(u,A)_{\mathrm{ad}}$. We thus obtain an exact
sequence of vector bundles over $M$ (the so-called Atiyah sequence, see
\cite[Th.\ 1]{Atiyah}):
\begin{equation}
\label{Atiyah_seq}
0\to\mathrm{ad}P\to T(P)/G
\overset {\pi _{\ast}}{\longrightarrow }TM\to 0.
\end{equation}
The fibres $\left( \mathrm{ad}P\right) _x$ are endowed
with a Lie algebra structure determined by
\begin{equation}
\left[ (u,A)_{\mathrm{ad}},(u,B)_{\mathrm{ad}}\right]
=(u,-\left[ A,B\right] )_{\mathrm{ad}},\quad
\forall u\in \pi ^{-1}\left( x\right ), \,
\forall A,B\in \mathfrak{g,}
\label{f2}
\end{equation}
where $\left[ \cdot ,\cdot \right] $ denotes the bracket in $\mathfrak{g}$,
but this is no longer true for the fibres of $T(P)/G$. The sign of the bracket
in \eqref{f2} is needed in order to ensure that the natural identification
$\mathrm{gau}P\simeq \Gamma \left( M,\mathrm{ad}P\right) $ is a Lie algebra
isomorphism, when $\mathrm{gau}P$ is considered as a Lie subalgebra of
$\mathfrak{X}(P)$.
\section{Bundle of connections\label{bundle_connections}}
Let $X^{h_\Gamma }\in \mathfrak{X}\left( P\right) $ be the horizontal lift
of $X\in \mathfrak{X}\left( M\right) $ with respect to a connection $\Gamma $
on $\pi \colon P\to M$. The vector field $X^{h_\Gamma }$ is
$G$-invariant and projects onto $X$ (cf.\ \cite[II.\ Proposition\ 1.2]{KN}).
Hence we have a splitting of \eqref{Atiyah_seq}, $s_\Gamma \colon TM
\to T(P)/G$, $s_\Gamma \left(  X\right)  =X^{h_\Gamma }$.
Conversely, any splitting $\sigma\colon TM\to T(P)/G$ of that sequence
comes from a unique connection on $P$. Therefore there is a natural bijection
between connections on $P$ and splittings of the sequence above. Connections
on $P$ can be identified to the global sections of a bundle $p\colon C\to M$;
the section of $p$ induced by $\Gamma$ is denoted by
$s_ \Gamma \colon M\to C$. Moreover, $C$ is an affine bundle modelled over
$\mathrm{Hom}\left( TM,\mathrm{ad}P\right) \cong T^\ast M\otimes \mathrm{ad}P$.
For more details we refer the reader to \cite{CM}.

Let $\left( U;x^i\right) $ be a coordinate system on an open domain
$U\subset M$ over which $\pi $ admits a section $s\colon U\to P$, so
that $\pi^{-1}(U)\cong U\times G$. For every $B\in \mathfrak{g}$ let
$\tilde{B}$ be the infinitesimal generator of the flow of gauge transformations
over $U$ defined by $\varphi_{t}^{B}\left( x,g\right)
=\left( x,\exp\left( tB\right) \cdot g\right) $, $x\in U$.
As $\pi\circ\varphi_{t}^{B}=\pi$ the vector field
$\tilde{B}\in \mathfrak{X}(\pi ^{-1}(U))$ is $\pi $-vertical. If
$(B_1,\dotsc,B_m)$ is a basis of $\mathfrak{g}$, then
$\tilde{B}_1,\dotsc,\tilde{B}_m$ is a basis of $\Gamma(U,\mathrm{ad}P)$.
The horizontal lift with respect to $\Gamma $ of the basic vector field
$\partial/\partial x^i$ is given as follows:
\begin{equation}
s_\Gamma \Bigl( \frac{\partial }{\partial x^i}\Bigr)
=\Bigl( \frac{\partial }{\partial x^i}\Bigr) ^{h_\Gamma }
=\frac{\partial }{\partial x^i}
-\left( A_i^\alpha \circ s_\Gamma \right) \tilde{B}_\alpha ,
\quad 1\leq i\leq n.
\label{f10}
\end{equation}
The functions $(x^i,A_{j}^\alpha )$, $i,j=1,\dotsc,n=\dim M$, $1\leq
\alpha\leq m=\dim G$, induce a coordinate system on $p^{-1}\left(  U\right)
=C\left(  \pi^{-1}U\right)  $ (cf.\ \cite{CM}); hence $\dim C=n(m+1)$.

Each automorphism $\phi \in \mathrm{Aut}P$ acts on connections by pulling back
connection forms; i.e., $\Gamma ^\prime =\Phi \left( \Gamma\right) $ where
$\omega _{\Gamma ^\prime }=(\Phi ^{-1})^\ast \omega _\Gamma $ (cf.\ \cite[II.\
Proposition\ 6.2-(b)]{KN}). For each $\phi \in \mathrm{Aut}P$ there exists a
unique diffeomorphism $\Phi _C\colon C\to C$ such that
$p\circ \Phi _C=\phi \circ p$, where $\phi \colon M\to M$ is the
diffeomorphism induced by $\Phi$ on the ground manifold. We thus obtain a
group homomorphism $\mathrm{Aut}P\to \mathrm{Diff}C$ and for every
connection $\Gamma$ on $P$ we have $\Phi _C\circ s_\Gamma
=s_{\Phi \left( \Gamma\right) }$. If $\Phi _t$ is the flow of a $G$-invariant
vector field $X\in \mathrm{aut}P$, then $\left( \Phi _t\right) _C$
is a one-parameter group in $\mathrm{Diff}C$ with infinitesimal generator
denoted by $X_C$, and the map $\mathrm{aut}P\to \mathfrak{X}\left( C\right) $,
$X\mapsto X_C$ is a Lie-algebra homomorphism.

By using a coordinate domain $\left(  U;x^i\right)  _{i=1}^{n}$ in $M$ and
the basis $(\tilde{B}_\alpha )_{\alpha=1}^{m}$ of $\mathrm{ad}\pi^{-1}(U) $
introduced above, it follows that each $X\in \mathrm{gau}\pi^{-1}(U)$ can be
written as
\begin{equation}
X=g^\alpha \tilde{B}_\alpha ,\quad g^\alpha \in C^\infty (U),
\label{f20}
\end{equation}
and, as a computation shows (e.g., see \cite{CM}), we have
\begin{equation}
X_C=-\left( \frac{\partial g^\alpha }{\partial x^i}
-c_{\beta \gamma }^\alpha g^\beta A_i^\gamma \right)
\frac{\partial }{\partial A_i^\alpha },
\label{f21}
\end{equation}
where $c_{\beta\gamma}^\alpha $ are the structure constants:
$[B_\beta ,B_\gamma ]=c_{\beta \gamma }^\alpha B_\alpha $.
\section{Gauge invariance}
\subsection{$\mathrm{gau}P$ and $\mathrm{aut}P$ invariance}
A Lagrangian density $\Lambda=Lv$, $L\in C^\infty (J^1C)$,
on the bundle of connections is said to be gauge invariant
if $X_C^{(1)}(L)=0$, $\forall X\in \mathrm{gau}P$, where
$X_C^{(1)}$ denotes the $1$-jet prolongation of the natural
representation $\mathrm{aut}P\to \mathfrak{X}(C)$. Similarly,
a Lagrangian density is said to be $\mathrm{aut}P$-invariant if
\[
\mathcal{L}_{X_C^{(1)}}(Lv)=X_C^{(1)}(L)v+L(\mathcal{L}_{X_C^{(1)}}v)=0,
\quad \forall X\in \mathrm{aut}P.
\]
The vector field $X_C^{(1)}$ is $p_1$-projectable onto
$X^\prime =\pi _\ast X$, where $p_1\colon J^1C\to M$ is the
canonical projection. We thus have
$\mathcal{L}_{X_C^{(1)}}(Lv)=(X_C^{(1)}(L)+L\mathrm{div}X^\prime )v$
and the condition of $\mathrm{aut}P$-invariance yields
$X_C^{(1)}(L)+L\mathrm{div}X^\prime =0$, $\forall X\in \mathrm{aut}P$.
It turns out, every $\mathrm{aut}P$-invariant Lagrangian
density is variationally trivial, as it is proved in
\cite[Corollary 1]{CMT}. Thus, the notion of $\mathrm{aut}P$-invariance
is too restrictive to be useful in Field Theory.

If $X\in \mathrm{gau}P$, then $X^\prime =0$ and the definition
of gauge invariance is recovered. As every $\Phi \in \mathrm{gau}P$
induces the identity map on $M$, the function $L$ is gauge invariant
if and only if the gauge group is a group of symmetries
of the Lagrangian density $\Lambda =L\mathbf{v}$, where $\mathbf{v}$
is the volume form on the ground manifold. For more details we refer
the reader to \cite{CM}, \cite{CMT}, and \cite{MJ}.
\subsection{The number of gauge-invariant Lagrangians}
Let
\begin{equation}
\begin{array}
[c]{l}
\Omega \colon J^1C\to \bigwedge\nolimits^2T^\ast M\otimes \mathrm{ad}P\\
\Omega (j_x^1\sigma _\Gamma )=\left( \Omega _\Gamma \right) _x
\end{array}
\label{curvature}
\end{equation}
be the curvature mapping. The curvature form $\Omega _\Gamma $ of the
connection $\Gamma $ corresponding to a section $s_\Gamma $ of $p$ is seen
to be a two form on $M$ with values in the adjoint bundle $\mathrm{ad}P$.
On the vector bundle $\bigwedge\nolimits^2T^\ast M\otimes \mathrm{ad}P$
we consider the coordinate systems $(x^i;R_{jk}^\alpha )$, $j<k$,
induced by a coordinate system $(U;x^i)_{i=1}^n$ on $M$, and a basis
$(B_\alpha )_{\alpha =1}^m$ of $\mathfrak{g}$, as follows:
\begin{equation}
\eta _2=\sum \limits_{j<k}R_{jk}^\alpha (\eta _2)
\left( dx^j\wedge dx^k\otimes \tilde{B}_\alpha \right) _x,\;
\forall\eta _2\in \bigwedge\nolimits^2T_x^\ast M\otimes (\mathrm{ad}P)_x,
\;\forall x\in U.\label{R's}
\end{equation}
With respect to the coordinate systems $(x^i,A_j^\alpha ,A_{j,k}^\alpha )$
and $(x^i;R_{jk}^\alpha )$, $j<k$, on $J^1C$ and
$\bigwedge\nolimits^2T^\ast M\otimes \mathrm{ad}P$, respectively,
the equations of the curvature mapping are as follows:
\begin{equation}
\begin{array}
[c]{rl}
R_{jk}^\alpha \circ\Omega = & A_{j,k}^\alpha -A_{k,j}^\alpha
-\sum _{\beta <\gamma }c_{\beta \gamma}^\alpha
\left( A_j^\beta A_k^\gamma -A_j^\gamma A_k^\beta \right) ,
\smallskip\\
& 1\leq j<k\leq n,\quad 1\leq \alpha \leq m.
\end{array}
\label{eqs1}
\end{equation}

The geometric formulation of Utiyama's Theorem (e.g., see \cite{Bl})
states that a Lagrangian $L\colon J^1C\to \mathbb{R}$ is gauge invariant
if and only if $L$ factors through $\Omega $ as $L=\tilde{L}\circ \Omega$,
where
\begin{equation}
\tilde{L}\colon \bigwedge\nolimits^2T^\ast M\otimes \mathrm{ad}P
\to \mathbb{R}
\label{Lbarra}
\end{equation}
is a $C^\infty $ function that is invariant under the adjoint representation
of $G$ on the curvature bundle. As the curvature map \eqref{curvature} is
surjective, the function $\tilde{L}$\ is unique.

If the group $G$ is connected, then according to the formulas \eqref{f21} and
\eqref{i.c.t.}, and taking the equations of the curvature mapping \eqref{eqs1}
into account, the function $\tilde{L}$ in the formula \eqref{Lbarra} is
invariant under the adjoint representation of $G$ on the curvature bundle if
and only if
\begin{equation}
\begin{array}
[c]{rll}
\chi _\alpha (\tilde{L})= & 0, & 1\leq \alpha \leq m.\\
\chi _\alpha = &
{\displaystyle\sum\nolimits_{i<j}}
c_{\gamma \alpha }^\beta R_{ij}^ \gamma
\dfrac{\partial }{\partial R_{ij}^\beta }. &
\end{array}
\label{inv_ad}
\end{equation}
Alternatively, this equivalence can also be deduced
from the formula for $X_C^{(1)}$ in \cite[(2.10)]{CMT}.
\begin{theorem}
\label{th1}
Assume the group $G$ is connected.

The distribution $\mathcal{D}$ on
$\bigwedge\nolimits^2T^\ast M\otimes \mathrm{ad}P$ generated
by the vector fields $\chi _\alpha $, $1\leq\alpha\leq m$,
given in \emph{\eqref{inv_ad}}, is involutive.
\begin{enumerate}
\item[\emph{(i)}]
The Lie algebra $\mathfrak{g}$ is Abelian if and only if
$\mathcal{D}=\{ 0\} $.

\item[\emph{(ii)}]
If $\mathfrak{g}$ is not Abelian, then the rank of
$\mathcal{D}$ on a dense open subset is constant.

\item[\emph{(iii)}]
If $\dim M=n=2$ and $\dim\mathfrak{g}=m$,
$\operatorname*{rank}\mathfrak{g}=l$, then the generic rank of $\mathcal{D}$
is equal to $m-l$.

\item[\emph{(iv)}]
If $\dim M=n\geq3$, $\dim\mathfrak{g}=m$, and
$\mathfrak{g}$ is semisimple, then the generic rank of $\mathcal{D}$ is equal
to $m$.
\end{enumerate}
\end{theorem}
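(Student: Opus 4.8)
The plan is to exploit that the vector fields $\chi_\alpha$ in \eqref{inv_ad} are vertical for the projection onto $M$ and have coefficients depending only on the fibre coordinates $R_{ij}^\gamma$; hence the whole problem is fibrewise and, after fixing $x\in M$ and a basis, reduces to linear algebra on a single fibre, which I identify with $\mathfrak{g}^N$, $N=\binom{n}{2}$, via $(A_1,\dots,A_N)$, $A_a=R_{ij}^\gamma B_\gamma$ (with $a$ indexing the pairs $i<j$). Then $\chi_\alpha$ evaluated at such a point is, up to sign, the tuple $([B_\alpha,A_a])_a$, so the span $\mathcal{D}_R=\{([B,A_a])_a:B\in\mathfrak{g}\}$ is the tangent space to the orbit of the diagonal adjoint action, and $\operatorname{rank}\mathcal{D}_R=m-\dim\mathfrak{c}(R)$, where $\mathfrak{c}(R)=\bigcap_a\mathfrak{z}_{\mathfrak{g}}(A_a)$ is the simultaneous centralizer. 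Involutivity I would obtain by the direct computation $[\chi_\alpha,\chi_\beta]=c_{\alpha\beta}^\gamma\chi_\gamma$, which is precisely the Jacobi identity written in structure constants (equivalently, the $\chi_\alpha$ are, up to sign, the fundamental vector fields of a group action, whose orbits always foliate).

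For (i), if $\mathfrak{g}$ is Abelian then every $c^\beta_{\gamma\alpha}=0$, so each $\chi_\alpha\equiv0$ and $\mathcal{D}=\{0\}$; conversely, since the $R^\gamma_{ij}$ are independent coordinates (here one uses $N\geq1$, i.e.\ $n\geq2$), $\mathcal{D}=\{0\}$ forces all $c^\beta_{\gamma\alpha}=0$, so $\mathfrak{g}$ is Abelian. For (ii), the coefficients of the $\chi_\alpha$ are linear, hence polynomial, in the $R^\gamma_{ij}$; therefore the rank function is lower semicontinuous and the set where it attains its maximal value $r_{\max}$ is the nonvanishing locus of some minor, an open set whose complement is a proper algebraic subset and thus nowhere dense. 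On this dense open set the rank is constantly $r_{\max}$.

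Part (iii) is the case $N=1$: the fibre is a single copy of $\mathfrak{g}$ and $\mathcal{D}_A$ is the tangent space to the adjoint orbit of $A$, of dimension $m-\dim\mathfrak{z}_{\mathfrak{g}}(A)$. By the definition of the rank of a Lie algebra, $\min_A\dim\mathfrak{z}_{\mathfrak{g}}(A)=l$ and this minimum is attained on the dense open set of regular elements; combined with (ii), the generic rank is $m-l$.

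Part (iv) is the substantive step, and I expect it to be the main obstacle. Here $n\geq3$ gives $N=\binom{n}{2}\geq3\geq2$, and since the orbit dimension is at most $m$ it suffices, by (ii), to exhibit one point whose simultaneous centralizer is trivial. Using only two of the $N$ slots, I would take (in the complexification $\mathfrak{g}^{\mathbb{C}}$, with Cartan subalgebra $\mathfrak{h}$ and root decomposition $\mathfrak{g}^{\mathbb{C}}=\mathfrak{h}\oplus\bigoplus_{\phi}\mathfrak{g}_\phi$) a regular $H\in\mathfrak{h}$, so that $\mathfrak{z}(H)=\mathfrak{h}$, together with $A=\sum_{\phi}X_\phi$, $0\neq X_\phi\in\mathfrak{g}_\phi$; then any $B$ centralizing both lies in $\mathfrak{h}$ and satisfies $\phi(B)=0$ for every root $\phi$, whence $B=0$ because the roots span $\mathfrak{h}^\ast$ for a semisimple algebra. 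This shows the maximal rank equals $m$ over $\mathbb{C}$. Since the structure constants are real, the $m\times m$ minors realizing rank $m$ are real polynomials in the fibre coordinates that are not identically zero (they are nonzero at the complex point just built), hence they are nonzero at some real point as well; by (ii) the generic rank over $\mathbb{R}$ is therefore $m$. The delicate points to get right are the descent from $\mathbb{C}$ to $\mathbb{R}$ and the verification that the root vectors can be chosen so that no nonzero element of $\mathfrak{h}$ survives in the centralizer of $A$.
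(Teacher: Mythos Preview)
Your argument is correct. For involutivity and parts (i)--(iii) you follow essentially the same route as the paper: the commutator identity $[\chi_\alpha,\chi_\beta]=c_{\alpha\beta}^\gamma\chi_\gamma$, the observation that the coefficients are linear (hence polynomial) in the fibre coordinates so that the maximal-rank locus is open and dense, and the identification of the $n=2$ fibre matrix with $\mathrm{ad}_B$ so that regular $B$ gives rank $m-l$.

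Part (iv) is where your approach genuinely diverges. The paper fixes a Chevalley basis, writes each $m\times m$ block $\Lambda_{jk}$ explicitly as $\begin{pmatrix}O&A_{jk}\\B_{jk}&C_{jk}\end{pmatrix}$, and then shows that a specific $m\times m$ minor of the $m\times 2m$ block $(\Lambda_{12},\Lambda_{13})$ is nonzero by a Laplace expansion along its last $l$ columns, isolating a term of the form $\det A'_{13}\cdot\det C_{12}$ that survives because the Cartan matrix is nonsingular. Your route is more structural: you recast the rank as $m-\dim\bigcap_a\mathfrak{z}_{\mathfrak{g}}(A_a)$ and then exhibit, in $\mathfrak{g}^{\mathbb C}$, a pair $(H,A)$ with trivial simultaneous centralizer (a regular $H$ in a Cartan subalgebra together with a sum of nonzero root vectors), invoking only that the roots span $\mathfrak{h}^\ast$. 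This is shorter, avoids the block determinant bookkeeping, and makes transparent why two slots (hence $n\geq3$) suffice. Your explicit handling of the descent from $\mathbb{C}$ to $\mathbb{R}$---noting that the relevant minors are real polynomials which, being nonzero at a complex point, cannot vanish identically---is a point the paper leaves implicit when it passes to a Chevalley basis. The paper's computation, on the other hand, has the virtue of locating a concrete nonvanishing minor, which feeds directly into the later examples.
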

\begin{proof}
As a computation shows, we have
$[\chi _\rho ,\chi _\sigma ]=c_{\rho \sigma }^\gamma \chi _\gamma $
for $1\leq \rho <\sigma \leq m$; hence $\mathcal{D}$ is involutive.

(i) The vector fields $\chi _\alpha $, $1\leq \alpha \leq m$, vanish if and only
if $c_{\gamma \alpha }^\beta R_{ij}^\gamma=0$ for all $\alpha ,\beta =1,\dotsc,m$,
$1\leq i<j\leq n$, and these equations are obviously equivalent
to saying that $c_{\gamma \alpha }^\beta =0\,$for all
$\alpha ,\beta ,\gamma=1,\dotsc,m$.
\smallskip

(ii) The rank of $\mathcal{D}$ at a point
$\eta _2\in \bigwedge\nolimits^2T_{x_0}^\ast M\otimes (\mathrm{ad}P)_{x_0}$
equals the rank of the $m\times m\tbinom{n}{2}$ matrix
\[
\begin{array}
[c]{rll}
\Lambda (\eta _2)= & \left( \lambda_{\beta ,ij}^\alpha (\eta _2)\right)
_{1\leq \beta \leq m,1\leq i<j\leq n}^{1\leq \alpha \leq m},
& \lambda_{\beta ,ij}^\alpha (\eta _2)
=c_{\gamma\alpha}^{\beta}R_{ij}^{\gamma}(\eta _2).
\end{array}
\]
As the entries of $\Lambda $ are polynomial functions in the coordinates
$R_{ij}^\alpha $, $1\leq\alpha\leq m$, $1\leq i<j\leq n$, it follows that the
rank of $\mathcal{D}$ takes its maximum value on a dense open subset, and we
have $\max _{\eta _2}\operatorname*{rank}\Lambda (\eta _2)\leq m$,
$\forall\eta _2\in \bigwedge\nolimits^2T^\ast M\otimes \mathrm{ad}P$.
\smallskip

(iii) If $n=2$, then $\Lambda (\eta _2)$ is a square matrix of size $m$.
If $B=R_{12}^{\gamma}(\eta _2)B_\gamma $, then the matrix of the linear map
$\mathrm{ad}_B\colon \mathfrak{g}\to \mathfrak{g}$ in the basis
$(B_\gamma )_{\gamma =1}^m$ coincides with $\Lambda (\eta _2)$; in fact,
we have $\mathrm{ad}_B(B_\alpha )=[B,B_\alpha ]
=c_{\gamma \alpha}^\beta R_{12}^\gamma (\eta _2)B_\beta $,
$1\leq\alpha\leq m $. Hence, if $B\in \mathfrak{g}$ is a regular element,
then the rank of $\Lambda (\eta _2)$ is $m-l$ exactly.
\smallskip

(iv) For every pair of indices $1\leq j<k\leq n$, let $\Lambda_{jk}(\eta _2)$
be the $m\times m$ matrix $\Lambda_{jk}(\eta _2)=\left(  \lambda_{\beta
,jk}^\alpha (\eta _2)\right)  _{\alpha,\beta=1}^{m}$. Then the $m\times
m\tbinom{n}{2}$ matrix $\Lambda(\eta _2)$ can be written in blocks as
follows: $\Lambda(\eta _2)=\left(  \Lambda_{12}(\eta _2),\dotsc,\Lambda
_{1n}(\eta _2),\dotsc,\Lambda_{n-1,n}(\eta _2)\right)  $.

In order to prove this case, we can use a Chevalley basis (e.g., see
\cite[Chapter 3, Theorem 1.19]{GOV}); more precisely: Let
$\{ \alpha _1,\dotsc,\alpha _l\} $ be a system of simple roots in the set
$\Delta _{\mathfrak{g}}=\{ \alpha _1,\dotsc,\alpha _l,\alpha _{l+1},\dotsc,
\alpha _{m-l}\}$, and let $h_i=h_{\alpha _i}$ for $1\leq i\leq l$. The
basis $h_i$, $1\leq i\leq l$, $e_\alpha $, $\alpha \in \Delta _{\mathfrak{g}}$,
of $\mathfrak{g}$ satisfies the following properties: $[h_i,h_j]=0$ for
$i,j=1,\dotsc,l$, $[h_i,e_\alpha ]=\left\langle \alpha|\alpha _i\right\rangle
e_\alpha $, $\alpha \in \Delta _{\mathfrak{g}}$, $1\leq i\leq l$, and
$[e_\alpha ,e_{-\alpha}]=h_\alpha =e^ih_i$, $e^i\in \mathbb{Z}$,
$1\leq i\leq l$, and if $\alpha ,\beta \in \Delta _{\mathfrak{g}}$,
$\alpha +\beta \neq0$, and $\beta -p\alpha ,\dotsc,\beta +q\alpha $
is the $\alpha $-string of roots containing $\beta $, then
$[e_\alpha ,e_\beta ]=0$ if $q=0$, $[e_\alpha ,e_\beta ]
=\pm(p+1)e_{\alpha +\beta}$ if $\alpha +\beta \in \Delta _{\mathfrak{g}}$;
or equivalently
\[
\begin{array}
[c]{rlrlll}
c_{ij}^{k}= & \!\!0, & c_{ij}^\alpha = & \!\!0, & i,j=1,\dotsc,l, &
\!\!\alpha \in \Delta _{\mathfrak{g}},\\
c_{i\alpha }^k= & \!\!0, & c_{i\alpha }^\beta = & \!\!
\delta_\alpha ^\beta
\left\langle \alpha|\alpha_i\right\rangle , & 1\leq i\leq l, &
\!\!\alpha,\beta \in \Delta _{\mathfrak{g}},\\
c_{\alpha,-\alpha}^i= & \!\! e^i, & c^\beta _{\alpha ,-\alpha }= &
\!\!0, & 1\leq i\leq l, & \!\!\alpha, \beta \in \Delta _{\mathfrak{g}},\\
c_{\alpha\beta}^i= & \!\!0, & c_{\alpha \beta}^\gamma = & \!\!0 &
\alpha ,\beta ,\gamma \in \Delta _{\mathfrak{g}}, & \!\!\alpha +\beta \neq 0,
\alpha +\beta \notin \Delta _{\mathfrak{g}},\\
c_{\alpha \beta }^i= & \!\!0, & c_{\alpha \beta }^\gamma = & \!\!
\pm \delta _{\alpha +\beta }^\gamma (p+1), & 1\leq i\leq l, & \!\!
\alpha + \beta \in \Delta _{\mathfrak{g}},\gamma \in \Delta _{\mathfrak{g}}.
\end{array}
\]
According to the general notations \eqref{R's} in this case we can write
\[
\eta _2=\sum\limits_{j<k}\sum _{i=1}^l\left( R_{jk}^i(\eta _2)dx^j
\wedge dx^k\otimes \tilde{h}_i\right) _x
+\sum\limits_{j<k}\sum _{a=1}^{m-l}\left( R_{jk}^{\alpha _a}(\eta _2)
dx^j\wedge dx^k\otimes \tilde{e}_{\alpha _a}\right) _x.
\]
With these notations, for $1\leq i<j\leq n$ we have
$\lambda _{u,jk}^t(\eta _2)=0$ for all $t,u=1,\dotsc,l$, and
\begin{equation}
\begin{array}
[c]{rll}
\lambda _{\beta _b,jk}^t(\eta _2)= &
-\left\langle \beta _b|\beta_t\right\rangle
R_{jk}^{\beta _b}(\eta _2), & 1\leq t\leq l,1\leq b\leq m-l,
\end{array}
\label{t_beta}
\end{equation}
\begin{equation}
\begin{array}
[c]{rll}
\lambda_{t,jk}^{\alpha_{a}}(\eta _2)= & -e^{t}R_{jk}^{-\alpha_{a}}(\eta
_2), & 1\leq t\leq l,1\leq a\leq m-l,
\end{array}
\label{alpha_t}
\end{equation}
\begin{equation}
\begin{array}
[c]{rl}
\lambda _{\alpha _v,jk}^{\alpha _u}(\eta _2)= &
\sum _{t=1}^l\delta _{\alpha _v}^{\alpha _u}
\left\langle \alpha _v|\alpha _t\right\rangle
R_{jk}^{t}(\eta _2)\\
& \multicolumn{1}{r}{\mp\sum _{r=1}^{m-l}(p_r+1)
\delta _{\alpha _r+\alpha _v}^{\alpha _u}
R_{jk}^{\alpha _r}(\eta _2),
\smallskip }\\
& \multicolumn{1}{r}{u,v=1,\dotsc,m-l.}
\end{array}
\label{alpha_u_alpha_v}
\end{equation}
Distinguishing cases, \eqref{alpha_u_alpha_v} is readily seen
to be equivalent to the following two formulas:

If $\alpha _u-\alpha _v=\alpha _{r_0}\in \Delta _{\mathfrak{g}}$,
for some $1\leq r_0\leq m-l$, then
\begin{equation}
\begin{array}
[c]{ll}
\lambda _{\alpha _v,jk}^{\alpha _u}(\eta _2)= &
\sum_ {t=1}^l\delta _{\alpha _v}^{\alpha _u}
\left\langle \alpha _v|\alpha _t\right\rangle
R_{jk}^{t}(\eta _2)\\
\multicolumn{1}{r}{} &
\multicolumn{1}{r}{\mp(p_{r_0}+1)R_{jk}^{\alpha _{r_0}}(\eta _2),
\smallskip}\\
\multicolumn{1}{r}{} & \multicolumn{1}{r}{u,v=1,\dotsc,m-l.}
\end{array}
\label{alpha_u_alpha_v_1}
\end{equation}
If $\alpha _u-\alpha _v\notin \Delta _{\mathfrak{g}}$, then
\begin{equation}
\begin{array}
[c]{ll}
\lambda _{\alpha _v,jk}^{\alpha _u}(\eta _2)= &
\sum _{t=1}^l\delta _{\alpha _v}^{\alpha _u}
\left\langle \alpha _{v}|\alpha _{t}\right\rangle
R_{jk}^t(\eta _2),
\smallskip\\
\multicolumn{1}{r}{} & \multicolumn{1}{r}{u,v=1,\dotsc,m-l.}
\end{array}
\label{alpha_u_alpha_v_2}
\end{equation}
Hence the $m\times m$ matrix $\Lambda _{jk}(\eta _2)$ is given by
\[
\Lambda _{jk}(\eta _2)=\left(
\begin{array}
[c]{cc}
O & A_{jk}(\eta _2)\\
B_{jk}(\eta _2) & C_{jk}(\eta _2)
\end{array}
\right)  ,
\]
where $O$ denotes the $l\times l$ zero matrix, and $A_{jk}(\eta _2)$,
$B_{jk}(\eta _2)$, $C_{jk}(\eta _2)$ are the matrices with sizes
$l\times(m-l)$, $(m-l)\times l$, $(m-l)\times(m-l)$, respectively, given by
\[
A_{jk}(\eta _2)=\left( \lambda _{\alpha _a,jk}^t(\eta _2)
\right) _{1\leq a\leq m-l}^{1\leq t\leq l},
\qquad
B_{jk}(\eta _2)=\left( \lambda _{t,jk}^{\alpha _a}(\eta _2)
\right) _{1\leq t\leq l}^{1\leq a\leq m-l},
\]
\[
C_{jk}(\eta _2)=\left( \lambda _{a_b,jk}^{\alpha _a}(\eta _2)
\right) _{1\leq b\leq m-l}^{1\leq a\leq m-l}.
\]
According to \eqref{t_beta} we have
\[
A_{jk}(\eta _2)=-\left(
\begin{array}
[c]{ccc}
\left\langle \beta_1|\beta_1\right\rangle R_{jk}^{\beta _1}(\eta _2) &
\ldots & \left\langle \beta _{m-l}|\beta _1\right\rangle
R_{jk}^{\beta _{m-l}}(\eta _2)\\
\vdots & \ddots & \vdots \\
\left\langle \beta _1|\beta _l\right\rangle
R_{jk}^{\beta _1}(\eta _2) & \ldots &
\left\langle \beta _{m-l}|\beta _l\right\rangle
R_{jk}^{\beta _{m-l}}(\eta _2)
\end{array}
\right) ,
\]
and according to \eqref{alpha_t} we have
\[
B_{jk}(\eta _2)=-\left(
\begin{array}
[c]{ccc}
e^1R_{jk}^{-\alpha _1}(\eta _2) & \ldots & e^lR_{jk}^{-\alpha _1}(\eta _2)\\
\vdots & \ddots & \vdots\\
e^1R_{jk}^{-\alpha _{m-l}}(\eta _2) &
\ldots & e^lR_{jk}^{-\alpha _{m-l}}(\eta _2)
\end{array}
\right) ,
\]
and $C_{jk}(\eta _2)=
(\lambda _{\alpha _v,jk}^{\alpha _u}(\eta _2))_{1\leq v\leq m-l}^{1\leq u\leq m-l}$
is given by the formulas \eqref{alpha_u_alpha_v_1} and \eqref{alpha_u_alpha_v_2}.

Let $A\subset \Delta _{\mathfrak{g}}$ be the set of elements
$\alpha _{r_0}\in \Delta _{\mathfrak{g}}$ for which there exist
$\alpha _u,\alpha _v\in \Delta _{\mathfrak{g}}$ such that
$\alpha _u-\alpha _v=\alpha _{r_0}$,
and let $E_{x_0}\subset \bigwedge\nolimits^2T_{x_0}^\ast M\otimes
(\mathrm{ad}P)_{x_0}$ be the closed subset of
$(\mathrm{ad}P)_{x_0}$-valued $2$-covectors $\eta _2^0$ such that
$R_{ij}^{\alpha _{r_0}}(\eta _2^0)=0$ for all indices
$1\leq i<j\leq n$, as long as $\alpha _{r_0}\in A$. If
$\eta _2^0\in E_{x_0}$, then $C_{ij}(\eta _2^0)$ is
a diagonal square matrix of order $m-l$, whose non-vanishing entries
are given by
\[
\begin{array}
[c]{rlll}
\mu _{u,ij}=\lambda _{\alpha _u,ij}^{\alpha _u}(\eta _2)= &
\sum _{t=1}^l\left\langle \alpha _u|\alpha _t\right\rangle
R_{ij}^t(\eta _2^0), & 1\leq u\leq m-l, & 1\leq i<j\leq n.
\end{array}
\]
Hence, by taking the values $R_{ij}^t(\eta _2^0)$
in a suitable dense open subset in $E_{x_0}$,
it follows that $\det C_{ij}(\eta _2^0)\neq 0$.

We have $m>2l$ (see Remark \ref{remark2}), as $\mathfrak{g}$ is semisimple.
We can thus decompose the matrix $A_{jk}(\eta _2)$
into two blocks as follows: $A_{jk}(\eta _2)
=(A_{jk}^\prime (\eta _2),A_{jk}^{\prime\prime}(\eta _2))$,
where
$A_{jk}^\prime (\eta _2)$ denotes the $l\times l$ submatrix
of $A_{jk}(\eta _2)$ determined by its $l$ rows and its first $l$ columns
of $A_{jk}(\eta _2)$, whereas $A_{jk}^{\prime\prime}(\eta _2)$ denotes the
$l\times(m-2l)$ submatrix of $A_{jk}(\eta _2)$ determined by its $l$ rows and
its $m-2l$ columns. As a computation shows, we have
\[
\det A_{jk}^\prime (\eta _2^0)=R_{jk}^{\alpha _1}(\eta _2^0)\cdots
R_{jk}^{\alpha _{l}}(\eta _2^{0})
\det (\left\langle \alpha _u|\alpha _v\right\rangle )_{u,v=1}^l.
\]
Hence $A_{jk}^\prime (\eta _2^0)$ is non-singular on a dense open subset
in $E_{x_0}$.

Moreover, since $n\geq 3$, we can consider the $m\times2m$ submatrix of
$\Lambda (\eta _2^0)$ given by $\Lambda ^\prime (\eta _2^0)=\left(
\Lambda _{12}(\eta _2^0),\Lambda _{13}(\eta _2^0)\right) $, and also the
$m\times m$ submatrix $\Lambda ^{\prime \prime }(\eta _2^0)$ of
$\Lambda ^\prime (\eta _2^0)$ defined by
\[
\Lambda ^{\prime \prime }(\eta _2^0)=\left(
\begin{array}
[c]{cc}
A_{12}(\eta _2^0) & A_{13}^\prime (\eta _2^0)\\
C_{12}(\eta _2^0) & C_{13}^\prime (\eta _2^0)
\end{array}
\right) ,
\]
where $C_{13}^\prime (\eta _2^0)$ is the $l\times l$ matrix
\[
C_{13}^\prime (\eta _2^0)=\left(
\begin{array}
[c]{cccc}
\mu _{1,13} & 0 & \ldots & 0\\
0 & \mu _{2,13} & 0 & 0\\
\vdots & 0 & \ddots & \vdots\\
0 & 0 & \ldots & \mu_{l,13}
\end{array}
\right) .
\]
Next, the determinant of $\Lambda ^{\prime \prime }(\eta _2^0)$ is evaluated
by the Laplacian expansion along the $l\times l$ minors of its $l$ last
columns; e.g., see \cite[III, \S 8, formula\ (21)]{Bourb}.

Let $S_l$ be the set of $l$-element subsets of $\{ 1,2,...,m\} $ and for every
$I=\{ i_1<\ldots <i_l\} \in S_l$ let us denote by $\Delta _I$ the $l\times l$
submatrix of
\[
\left(
\begin{array}
[c]{c}
A_{13}^\prime (\eta _2^0)\\
C_{13}^\prime (\eta _2^0)
\end{array}
\right)
\]
determined by the rows $i_1,\dotsc,i_l$; for example
\[
\begin{array}
[c]{rlrl}
\Delta _{\{1,\dotsc,l\}}= & A_{13}^\prime (\eta _2^0), &
\Delta _{\{m-l+1,\dotsc,m\}}= & C_{13}^\prime (\eta _2^0).
\end{array}
\]
If $\Delta _I^\prime $ denotes the complement of $\Delta _I$ in
$\Lambda ^{\prime \prime }(\eta _2^0)$, then by setting
$|I|=i_1+\ldots +i_l$, we have
$\det \Lambda ^{\prime \prime }(\eta _2^0)
=\sum _{I\in S_l}(-1)^{\frac{(m-l+1)+m}{2}l+|I|}(\det \Delta _I)
(\det \Delta _I^\prime )$.

In this formula the functions $\det \Delta _I$ depend on the values
$R_{13}^t(\eta _2^0)$ only, while the functions $\det \Delta _I^\prime $
depend on the values $R_{12}^t(\eta _2^0)$ only. Hence
$\det \Lambda ^{\prime \prime }(\eta _2^0)$ is written as a sum
of double products of functions depending on disjoint values. Furthermore,
by separating the first summand from the right-hand side of the previous formula,
it follows:
\begin{align*}
\det \Lambda^{\prime \prime }(\eta _2^0)  &
=(-1)^{l(m+1)}R_{13}^{\alpha _1}(\eta _2^0)\cdots
R_{13}^{\alpha _l}(\eta _2^0)\det (\left\langle
\alpha _u|\alpha _v\right\rangle )_{u,v=1}^l\mu_{1,13}\cdots
\mu_{l,13}\\
& +\sum _{I\in S_l,I\neq \{ 1,\dotsc,l\} }
(-1)^{\frac{(m-l+1)+m}{2}l+|I|}(\det \Delta _I)(\det \Delta _I^\prime ),
\end{align*}
with which, one concludes the proof.
\end{proof}
\subsection{Generators for gauge-invariant Lagrangians}
Given a vector $\alpha \in \oplus ^N(\mathrm{ad}P)_x$ and an element
$u\in \pi ^{-1}(x)$, there exist unique elements
$A_1,\dotsc,A_N\in \mathfrak{g}$ such that
$\alpha =\left( (u,A_1)_{\mathrm{ad}},\dotsc,(u,A_N)_{\mathrm{ad}}\right) $.
Actually, $\alpha $ can be written as
$\alpha =\left( (u_1^\prime ,A_1^\prime )_{\mathrm{ad}},\dotsc,
(u_N^\prime ,A_N^\prime )_{\mathrm{ad}}\right) $, with $u_h^\prime \in \pi ^{-1}(x)$,
$A_h^\prime \in \mathfrak{g}$, $1\leq h\leq N$. As $G$ operates freely
and transitively on the fibre $\pi^{-1}(x)$, there exist unique elements
$g_1,\dotsc,g_N\in G$ such that $u_h^\prime =u\cdot g_h$, $1\leq h\leq N$.
Hence
\begin{align*}
\alpha & =\left( (u\cdot g_1,A_1^\prime )_{\mathrm{ad}},\dotsc,
(u\cdot g_N,A_{N}^\prime )_{\mathrm{ad}}\right) \\
& =\left( (u,\mathrm{Ad}_{g_1}A_1^\prime )_{\mathrm{ad}},\dotsc,
(u,\mathrm{Ad}_{g_{N}}A_{N}^\prime )_{\mathrm{ad}}\right) .
\end{align*}

If $I\colon \oplus ^N\mathfrak{g}\to \mathbb{R}$ is a polynomial
function invariant under the diagonal action induced by the adjoint
representation of $G$ on its Lie algebra $\mathfrak{g}$, then a function
$\tilde{I}\in C^\infty (\oplus ^N\mathrm{ad}P)$ can be associated by setting
\[
\begin{array}
[c]{l}
\tilde{I}\left( (u,A_1)_{\mathrm{ad}},\dotsc,(u,A_N)_{\mathrm{ad}}\right)
=I(A_1,\dotsc,A_N),\\
\forall A_1,\dotsc,A_N\in \mathfrak{g},
\end{array}
\]
as the formula above makes sense because it does not depend on the
representative chosen.
In fact, any other representative of the element
\[\alpha =\left(
(u,A_1)_{\mathrm{ad}},\dotsc,(u,A_N)_{\mathrm{ad}}\right)
\]
is of the form
$\alpha =\left( (u\cdot g^{-1},\mathrm{Ad}_{g}A_1)_{\mathrm{ad}},\dotsc,
(u\cdot g^{-1},\mathrm{Ad}_{g}A_N)_{\mathrm{ad}}\right) $,
$\forall g\in G$.
As $I$ is invariant under the diagonal action we have
\[
\!
\begin{array}
[c]{rl}
\tilde{I}\left( (u\cdot g^{-1},\mathrm{Ad}_gA_1)_{\mathrm{ad}}
,\dotsc,(u\cdot g^{-1},\mathrm{Ad}_{g}A_N)_{\mathrm{ad}}\right) = &
I(\mathrm{Ad}_gA_1,\dotsc,\mathrm{Ad}_gA_N)\\
= & I(A_1,\dotsc,A_{N})\\
= & \tilde{I}\left( (u,A_1)_{\mathrm{ad}}
,\dotsc,(u,A_{N})_{\mathrm{ad}}\right)  .
\end{array}
\]
\begin{theorem}
\label{th2}
Assume the group $G$ is connected and semisimple.
If $(U;x^i)_{i=1}^{n}$ is a coordinate system such that $P$
is trivial over $U$, then a $\mathrm{Gau}P|_U$-equivariant
vector-bundle isomorphism
$\Psi \colon \bigwedge^2T^\ast U\otimes \mathrm{ad}P|_U
\to \oplus ^N\mathrm{ad}P|_U$, $N=\frac{1}{2}n(n-1)$,
is defined by
\begin{equation}
\begin{array}
[c]{rl}
\Psi (\eta _2)= & \left( \eta _2\left( \left(
\tfrac{\partial }{\partial x^1}\right) _x,
\left( \tfrac{\partial }{\partial x^2}\right) _x\right) ,\dotsc,
\eta _2\left( \left( \tfrac{\partial }{\partial x^{n-1}}\right) _x,
\left( \tfrac{\partial }{\partial x^n}\right) _x\right) \right) ,
\medskip \\
& \eta _2\in \bigwedge^2T_x^\ast U\otimes(\mathrm{ad}P)_x,\;x\in U.
\end{array}
\label{Psi}
\end{equation}
There exists a finite system of generators $I^i$, $1\leq i\leq\nu$,
of the algebra of polynomial functions $\mathcal{P}(\oplus ^N\mathfrak{g})^G$
invariant under the diagonal action induced by the adjoint representation
of $G$ on $\mathfrak{g}$ and the functions $\tilde{I}^i\circ \Psi $,
$1\leq i\leq \nu $, generate the algebra
$C^\infty (\bigwedge^2T^\ast U\otimes \mathrm{ad}P|_{U})^{\mathrm{Gau}P|_U}$
over $C^\infty (U)$ differentiably.

Finally, if $(U_\alpha;x_\alpha^i)$, $1\leq i\leq n$, $\alpha \in A$, is
a coordinate system such that $\cup _{\alpha \in A}U_\alpha =M$,
$P$ is trivial over every $U_\alpha$, and $(\eta _\alpha )$, $\alpha \in A$,
is a partition of unity subordinate to $(U_\alpha )$, $\alpha \in A$,
then the functions
$J^i=\sum _{\alpha \in A}\eta _\alpha (\tilde{I}_\alpha ^i\circ \Psi )$,
$1\leq i\leq\nu$, generate the algebra
$C^\infty (\bigwedge^2T^\ast U\otimes \mathrm{ad}P)^{\mathrm{Gau}P}$
over $C^\infty (M)$ differentiably.
\end{theorem}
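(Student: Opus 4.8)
The plan is to reduce the smooth gauge-invariant theory on the curvature bundle to the classical invariant theory of the adjoint representation, treating the three assertions in turn, the decisive input being a Schwarz-type theorem that upgrades polynomial to differentiable generation. First I would verify that \eqref{Psi} defines a $\mathrm{Gau}P|_U$-equivariant vector-bundle isomorphism. Since $P$ is trivial over $U$, the fields $\partial/\partial x^i$ frame $TU$, so an element $\eta_2\in\bigwedge^2T_x^\ast U\otimes(\mathrm{ad}P)_x$ is determined by---and may be prescribed freely through---its $N=\binom{n}{2}$ values $\eta_2(\partial/\partial x^j,\partial/\partial x^k)\in(\mathrm{ad}P)_x$, $j<k$; this makes $\Psi$ a fibrewise linear isomorphism depending smoothly on $x$. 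For equivariance, every element of $\mathrm{Gau}P$ covers $\mathrm{id}_M$, hence acts trivially on $T^\ast U$ and only through the natural action on the $\mathrm{ad}P$ factor, diagonally on both sides; as $\Psi$ merely evaluates on the base frame, it intertwines the two actions.

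Because $\Psi$ is an equivariant isomorphism, $f\mapsto f\circ\Psi^{-1}$ carries $\mathrm{Gau}P|_U$-invariants to $\mathrm{Gau}P|_U$-invariants, so it suffices to describe the invariant smooth functions on $\oplus^N\mathrm{ad}P|_U$. Trivialising $\mathrm{ad}P|_U\cong U\times\mathfrak{g}$ and using $\mathrm{Gau}P|_U\cong C^\infty(U,G)$ (from the identification $\mathrm{Gau}P\cong\Gamma(M,\mathrm{Ad}P)$), a gauge transformation $g\colon U\to G$ acts by $(x,(A_1,\dots,A_N))\mapsto(x,(\mathrm{Ad}_{g(x)}A_1,\dots,\mathrm{Ad}_{g(x)}A_N))$. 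Since $g(x_0)$ may be taken to be an arbitrary element of $G$ while $x_0$ is fixed, $\mathrm{Gau}P|_U$-invariance of $f$ is equivalent to $G$-invariance of $f(x,\cdot)$ under the diagonal adjoint action on $\oplus^N\mathfrak{g}$ for each $x\in U$, and under this correspondence $\tilde{I}^i\circ\Psi$ becomes $I^i$.

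As $G$ is connected and semisimple, hence reductive, the Hilbert-Nagata theorem furnishes a finite generating system $I^1,\dots,I^\nu$ of the polynomial algebra $\mathcal{P}(\oplus^N\mathfrak{g})^G$. The substantive step is the passage from polynomial to differentiable generation: here I would invoke the theorem of G.~Schwarz, in its version with parameters $x\in U$, to conclude that every smooth fibrewise-invariant $f$ has the form $F(x,I^1,\dots,I^\nu)$ with $F\in C^\infty(U\times\mathbb{R}^\nu)$, and transporting this back through $\Psi$ proves the second assertion. I expect this to be the main obstacle, because Schwarz's theorem genuinely requires compactness of the group, so one must either take $G$ compact---as is standard for gauge groups---or argue through a compact real form; in any case the jump from the purely algebraic Hilbert-Nagata statement to differentiable generation is where the analytic content lies and cannot be bypassed.

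Finally, the $J^i$ are global, smooth and gauge invariant, and by the local result $f=F_\alpha(x,\tilde{I}^\bullet\circ\Psi_\alpha)$ on each $U_\alpha$. The delicate point is that $\tilde{I}^i\circ\Psi_\alpha$ is chart dependent: on an overlap one has $\Psi_\alpha=T_{\alpha\beta}\circ\Psi_\beta$, where $T_{\alpha\beta}$ is the fibrewise linear map induced by the Jacobian acting on the two-form indices and commuting with the $G$-action, so that each $\tilde{I}^i\circ\Psi_\alpha$ is a polynomial in $\tilde{I}^1\circ\Psi_\beta,\dots,\tilde{I}^\nu\circ\Psi_\beta$ with coefficients in $C^\infty(U_\alpha\cap U_\beta)$. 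Exploiting this compatibility together with the local finiteness of $(\eta_\alpha)$, I would rewrite each $J^i=\sum_\beta\eta_\beta(\tilde{I}^i\circ\Psi_\beta)$ over $U_\alpha$ as a smooth function of $(x,\tilde{I}^\bullet\circ\Psi_\alpha)$ and then reassemble the local expressions $F_\alpha$ into a single global differentiable expression in the $J^i$; checking that the local data patch consistently, so that the resulting $F$ is globally smooth, is the remaining technical hurdle.
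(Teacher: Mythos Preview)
Your argument tracks the paper's almost step for step: $\Psi$ is shown to be a fibrewise linear isomorphism intertwining the gauge action, triviality of $P|_U$ reduces matters to the diagonal adjoint action of $G$ on $\oplus^N\mathfrak{g}$, and Hilbert--Nagata supplies finite polynomial generators. The one substantive divergence is in the passage from polynomial to differentiable generation. You invoke Schwarz's theorem and rightly flag that it needs compactness; your suggested workaround via a compact real form is not straightforward for an arbitrary connected real semisimple $G$ (think $SL(2,\mathbb{R})$). The paper avoids this entirely by citing Luna's theorem \cite{Luna}, which extends Schwarz's result to reductive groups acting linearly on a real vector space and therefore applies here without any compactness hypothesis. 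That reference is the clean resolution of exactly the obstacle you identified.

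On the final global assertion your outline is in fact more thorough than the paper's. The paper only verifies that each $J^i$ is globally defined and annihilated by the distribution $\mathcal{D}$ (hence gauge invariant, $G$ being connected), and then stops; it does not address the chart dependence of $\tilde I^i\circ\Psi_\alpha$ or argue that an arbitrary invariant is a smooth function of the $J^i$. Your discussion of the transition maps $T_{\alpha\beta}$ and the resulting polynomial relations among the chart-wise invariants is the right idea for closing this, though the final patching you describe remains a sketch rather than a proof.
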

\begin{proof}
From the very definition of $\Psi $, it follows that the map $\Psi _x$
induced on every fibre $\bigwedge^2T_x^\ast U\otimes(\mathrm{ad}P)_x$,
$x\in U$, is $\mathbb{R}$-linear.

If $\eta _2=\sum _{1\leq i<j\leq n}(dx^i)_x\wedge (dx^j)_x\otimes A_{ij}$,
$A_{ij}\in (\mathrm{ad}P)_x$, belongs to $\ker \Psi _x$, then
$A_{ij}=0$ for $1\leq i<j\leq n$. Hence $\Psi _x$ is injective and since
the vector bundles $\bigwedge^2T^\ast M\otimes \mathrm{ad}P$ and
$\oplus ^N\mathrm{ad}P$ have te same rank, we conclude that $\Psi $
is an isomorphism. Moreover, every $\phi \in \mathrm{Gau}P$ acts
on $\mathrm{ad}P$ as explained in \S \ref{aut and Aut}, namely
$\Phi _{\mathfrak{g}}((u,v)\operatorname{mod}G)
=(\Phi (u),v)\operatorname{mod}G$,
$\forall u\in P$, $\forall v\in \mathfrak{g}$, and acts trivially
on $\bigwedge^2T^\ast M$, thus proving that $\Psi $ is
$\mathrm{Gau}P|_U$-equivariant.

As $G$ is assumed to be connected and semisimple, according to Weyl's Theorem
every (finite dimensional) linear representation of $G$ is completely
reducible. Hence by virtue of Hilbert-Nagata Theorem there exists a finite
system of generators $I^i$, $1\leq i\leq\nu$, of the algebra of polynomial
functions $\mathcal{P}(\oplus ^N\mathfrak{g})^G$ invariant under the
diagonal action induced by the adjoint representation of $G$ on $\mathfrak{g}$,
as said in the statement. If $I\colon \oplus ^N\mathfrak{g}
\to \mathbb{R}^\nu $ is the map whose components are $I^1,\dotsc,I^\nu$,
then by virtue of the main result in \cite{Luna} it follows that for every
$f\in C^\infty (\oplus ^N\mathfrak{g})^G$ there exists
$g\in C^\infty (\mathbb{R}^{\nu})$ such that $f=g\circ I$. Moreover,
as $P|_U$ is trivial by virtue of the hypothesis we can choose
a trivialization $\tau\colon P|_U\to U\times G$, i.e., $\tau $
is an isomorphism of principal $G$-bundles such that
$\mathrm{pr}_1\circ \tau =\pi$, which induces an isomorphism
of vector bundles over $U$, $\tau _{\mathrm{ad}}\colon \oplus ^N\mathrm{ad}P|_U
\to U\times \oplus ^N\mathfrak{g}$. Hence every
$F\in C^\infty (\bigwedge^2T^\ast U\otimes \mathrm{ad}P|_U)^{\mathrm{Gau}P}$
can be written as $F=(g\circ I)\circ \mathrm{pr}_2\circ \tau _{\mathrm{ad}}
\circ \Psi $, where $\mathrm{pr}_2\colon U\times\oplus ^N\mathfrak{g}
\to \oplus ^N\mathfrak{g}$ denotes the canonical projection onto the second factor.

Finally, as $\operatorname*{support}\eta _\alpha\subset U_\alpha $ and
$\eta _\alpha \in C^\infty (M)$, it follows that
$\eta _\alpha (\tilde{I}_\alpha ^i\circ \Psi )$ is globally defined
for $1\leq i\leq\nu$, and taking account of the fact that the vector fields
$\chi _\alpha $, $1\leq \alpha \leq m$, in \eqref{inv_ad} spanning
the distribution $\mathcal{D}$, are vertical with respect to the natural
projection $\bigwedge^2T^\ast M\otimes \mathrm{ad}P\to M$, we can conclude
that $X(J^i)=0$ for every $1\leq i\leq\nu $ and every
$X\in \Gamma (M,\mathcal{D})$, thus finishing the proof.
\end{proof}
\section{Remarks and Examples}
\begin{remark}
\label{remark1}
Let $G$ be a connected Lie group. As $M$ is also assumed to be
connected and oriented, if $n=2$, then $\bigwedge\nolimits^2T^\ast M$
is a trivial bundle of rank $1$; hence
$\bigwedge\nolimits^2T^\ast M\otimes \mathrm{ad}P$ is isomorphic
to the adjoint bundle by means of the map
$\bigwedge\nolimits^2T^\ast M\otimes \mathrm{ad}P\to \mathrm{ad}P$,
$\mathbf{v}_x\otimes X\mapsto X$, $\forall X\in (\mathrm{ad}P)_x$,
$\mathbf{v}$ being the volume form on $M$. As $G$ is connected, the first
integrals of $\mathcal{D}$ coincide with the $C^\infty $ functions on
$\mathrm{ad}P$ invariant under the adjoint representation of $G$ on
$\mathfrak{g}$, namely \ $C^\infty (\mathrm{ad}P)^G
=\left\{ f\in C^\infty (\mathrm{ad}P):X(f)=0,
\forall X\in \mathcal{D}\right\} $.
If $p\in S^\bullet (\mathfrak{g}^\ast )$ is a polynomial invariant
under the adjoint representation of $G$ on its Lie algebra and if $G$
is a semisimple complex Lie group, then we can consider the function
$\tilde{p}\in C^\infty (\mathrm{ad}P)$ defined as above
and Chevalley's theorem (e.g., see \cite[Theorem 4.9.3]{Var})
ensures the existence of $l$ homogeneous algebraically independent polynomials
$p_1,\dotsc,p_l$ such that the algebra of polynomial functions on $\mathfrak{g}$
that are invariant with respect to the adjoint representation of $G$
on $\mathfrak{g}$, is isomorphic to $\mathbb{C}[p_1,\dotsc,p_l]$, i.e.,
$S^\bullet(\mathfrak{g}^\ast )^G=\mathbb{C}[p_1,\dotsc,p_l]$.
If $\varphi \colon \mathrm{ad}P\to M\times \mathbb{C}^l$ is the map given by
\[
\varphi ((u,A)_{\mathrm{ad}})
=\left( \pi (u),\tilde{p}_1((u,A)_{\mathrm{ad}}),\dotsc,
\tilde{p}_l((u,A)_{\mathrm{ad}})\right) ,
\]
then according to \cite[\textsc{Theorem}. 3)]{Luna} we have
$C^\infty (\mathrm{ad}P)^G=\varphi ^\ast C^\infty (M\times\mathbb{C}^l)$
and by applying Frobenius theorem we conclude that
$C^\infty (\bigwedge\nolimits^2T^\ast M\otimes\mathrm{ad}P)^G$
is generated by the $l$ functions $\tilde{p}_1,\dotsc,\tilde{p}_l$.

This fact explains why equations of Yang-Mills type on a surface admit a
geometric treatment; see \cite{AB}.
\end{remark}
\begin{remark}
\label{remark2}
The map $\pi ^{(N)}\colon P^{(N)}=P\times _M\overset{(N}{\ldots }
\times_mP\to M$ is a principal fibre bundle with structure group
$G^N$, the adjoint bundle of which can be identified
to $\mathrm{ad}P^{(N)}=\oplus ^N\mathrm{ad}P$. If $G$ is complex
and semisimple, then $G^N$ also is, and we have $\dim G^N=Nm$,
$\operatorname*{rank}\mathfrak{g}^{N}=Nl$,
$l=\operatorname*{rank}\mathfrak{g}$. Hence Chevalley's
theorem can be applied to the adjoint representation of $G^N$ on its Lie
algebra, thus deducing that its ring of invariants admits a basis of $Nl$
algebraically independent homogeneous polynomials that can be constructed
from a basis $p_1,\dotsc,p_l$ of Chevalley for $\mathfrak{g}$ by defining
the polynomial $p_i^h\colon \mathfrak{g}^N\to \mathbb{C}$ by
$p_i^h(A_1,\dotsc,A_N)=p_i(A_h)$, $1\leq i\leq l$, $1\leq h\leq N$.
In particular, such polynomials are also invariant under the diagonal
action induced by the adjoint representation of $G$ on $\mathfrak{g}^N$;
but the polynomials $(p_i^h)_{1\leq i\leq l}^{1\leq h\leq N}$
do not generate in general the ring of invariants for the diagonal action.
In fact, if $n\geq 3$, then we know that the generic rank of the distribution
$\mathcal{D}$ is $m$; hence, the maximum number of functionally independient
gauge-invariant functions is $mN-m=m(N-1)$. As the number of polynomials
$p_i^h$ is $Nl$, if such polynomials span the ring of gauge invariants
it should be $Nl\geq m(N-1)$, i.e., $\frac{l}{m}\geq 1-\frac{1}{N}$,
and this inequality never occurs in the semisimple case,
as in this case we have $m>2l$. Actually, as dimension
and rank are aditive, it suffice to prove the last inequality
for simple algebras:
\[
\begin{array}
[c]{rlrl}
\dim\mathfrak{sl}(l+1,\mathbb{C})= & l(l+2)>2l, & \dim\mathfrak{so}
(2l+1,\mathbb{C})= & l(2l+1)>2l,\\
\dim\mathfrak{sp}(2l,\mathbb{C})= & l(2l+1)>2l, & \dim\mathfrak{so}
(2l,\mathbb{C})= & l(2l-1)>2l,
\end{array}
\]
as in the fourth case me must have $l\geq2$; and for the exceptional algebras:
\[
\begin{array}
[c]{rlrlrl}
\dim\mathfrak{e}_6= & 78>12, & \dim\mathfrak{e}_7= & 133>14, &
\dim\mathfrak{e}_8= & 248>16,\\
\dim\mathfrak{f}_4= & 52>8, & \dim\mathfrak{g}_2= & 14>4. &  &
\end{array}
\]
\end{remark}
\begin{example}
\label{example0}
If $\dim M=n=3$ y $\mathfrak{g}=\mathfrak{sl}(2,\mathbb{R})$,
then $m=3$, $l=1$. The basic invariant is
$p(X)=x_{12}x_{21}+(x_{11})^2=-\det (X)$, where
\[
X=\left(
\begin{array}
[c]{cc}
x_{11} & x_{12}\\
x_{21} & -x_{11}
\end{array}
\right) \in \mathfrak{sl}(2,\mathbb{R}).
\]
The maximum number of functionally independent $G$-invariant functions over
the curvature bundle $\bigwedge\nolimits^2T^{\ast}(M)\otimes\mathrm{ad}
P\underset{\text{locally}}{\cong}\oplus^3\mathrm{ad}P$ is $3\tbinom{3}
{2}-3=6$ in this case.

To the quadratic polynomial $p$ corresponds a unique symmetric bilinear form
$s_{p}\colon\mathfrak{sl}(2,\mathbb{R})\times\mathfrak{sl}(2,\mathbb{R}
)\to \mathbb{R}$ obtained by polarization
\[
p(X,Y)=\tfrac{1}{2}\left[ p(X+Y)-p(X)-p(Y)\right]
=x_{11}y_{11}+\tfrac{1}{2}x_{12}y_{21}+\tfrac{1}{2}x_{21}y_{12},
\]
from which the following invariants on $\oplus^3\mathfrak{sl}(2,\mathbb{R})$
follow:
\[
\begin{array}
[c]{rlrl}
I_{11}(X,Y,Z)= & s_{p}(X,X), & I_{22}(X,Y,Z)= & s_{p}(Y,Y),\\
I_{33}(X,Y,Z)= & s_{p}(Z,Z), & I_{12}(X,Y,Z)= & s_{p}(X,Y),\\
I_{13}(X,Y,Z)= & s_{p}(X,Z), & I_{23}(X,Y,Z)= & s_{p}(Y,Z).
\end{array}
\]
As a calculation shows, these $6$ functions are functionally independent in
the dense open subset defined by
\begin{multline*}
0\neq z_{11}(x_{11}y_{21}-x_{21}y_{11})\cdot \\
\left( x_{11}y_{12}z_{21}-x_{11}y_{21}z_{12}-x_{12}y_{11}z_{21}
+x_{12}y_{21}z_{11}+x_{21}y_{11}z_{12}-x_{21}y_{12}z_{11}\right) .
\end{multline*}
By using the local method in Theorem \ref{th2} by means of the map
\eqref{Psi}, it follows:
\[
\begin{array}
[c]{rl}
I_{11}\circ \Psi= & R_{12}^1R_{12}^3+(R_{12}^2)^2,
\smallskip\\
I_{12}\circ \Psi= & R_{12}^2R_{13}^2+\tfrac{1}{2}
\left( R_{12}^1R_{12}^3+R_{12}^1R_{12}^3\right) ,
\smallskip\\
I_{13}\circ \Psi= & R_{12}^2R_{23}^2+\tfrac{1}{2}
\left( R_{23}^1R_{12}^3+R_{12}^1R_{23}^3\right) ,
\smallskip \\
I_{22}\circ \Psi= & R_{13}^3R_{13}^1+(R_{13}^2)^2,
\smallskip\\
I_{23}\circ \Psi= & R_{13}^2R_{23}^2
+\tfrac{1}{2}\left(  R_{23}^1R_{13}^3+R_{13}^1R_{23}^3\right)  ,
\smallskip\\
I_{33}\circ\Psi= & R_{23}^1R_{23}^3+(R_{23}^2)^2.
\end{array}
\]
\end{example}
\begin{example}
\label{example1}
If $(e_{\alpha \beta })_{\alpha, \beta =1}^m$ denotes the
standard basis for $\mathfrak{gl}(m,\mathbb{R})$, then we have
$[e_{\alpha \beta },e_{\rho \sigma }]
=\delta _{\beta \rho }e_{\alpha \sigma }
-\delta_{\alpha \sigma }e_{\rho\beta}$. If
$\mathfrak{g}=\left\langle e_{12},e_{13},e_{23}\right\rangle $
denotes the Heisenberg Lie algebra and $B_1=e_{12}$,
$B_2=e_{13}$, $B_3=e_{23}$, then $[B_1,B_2]=0$, $[B_1,B_3]=-B_2$,
$[B_2,B_3]=0$, or equivalently $c_{12}^\alpha =c_{23}^\alpha =0$,
$c_{13}^\alpha =-\delta _2^\alpha $, $1\leq \alpha \leq3$;
it thus follows: $\chi _1={\textstyle\sum\nolimits_{i<j}}
R_{ij}^3\tfrac{\partial }{\partial R_{ij}^2}$, $\chi _2=0$,
$\chi _3=-{\textstyle\sum\nolimits_{i<j}}
R_{ij}^1\tfrac{\partial }{\partial R_{ij}^2}$.
We distinguish two cases:
1st) If $n=2$, then
$\chi _1=R_{12}^3\tfrac{\partial }{\partial R_{12}^2}$,
$\chi _3=-R_{12}^1\tfrac{\partial }{\partial R_{12}^2}$,
and these two vectors are proportional on a dense open subset.
Hence the generic rank of $\mathcal{D}$ is $1$ in this case;
2nd) If $n\geq 3$, then the matrix of $\chi _1$ and $\chi _2$
in the basis $\tfrac{\partial }{\partial R_{ij}^2}$,
$1\leq i<j\leq n$, is as follows:
\[
\left(
\begin{array}
[c]{cccccccc}
R_{12}^3 & \ldots & R_{1n}^3 & R_{23}^3 & \ldots & R_{2n}^3 & \ldots &
R_{n-1,n}^3\\
-R_{12}^1 & \ldots & -R_{1n}^1 & -R_{23}^1 & \ldots & -R_{2n}^1 &
\ldots & -R_{n-1,n}^1
\end{array}
\right) ,
\]
and on the dense open subset on which at least one of the determinants
\[
{\small \left\vert \!
\begin{array}
[c]{cc}
\!R_{hi}^3 & \!R_{jk}^3\\
\!-R_{hi}^1 & \!-R_{jk}^1
\end{array}
\!\right\vert }, \quad
1\leq h<i\leq j<k\leq n,
\]
does not vanish, the rank of $\mathcal{D}$ is $2$ in this case.
Note that the algebra $\mathfrak{g}$ under consideration is not semisiple.
\end{example}
\begin{example}
\label{example2}
Let us consider $\mathfrak{g}=\mathfrak{sl}(2,\mathbb{R})$
with its standard basis $B_1=e_{21}$, $B_2=e_{11}-e_{22}$, $B_3=e_{12}$;
$[B_1,B_2]=2B_1$, $[B_1,B_3]=-B_2$, $[B_2,B_3]=2B_3$, i.e.,
$c_{12}^\alpha =2\delta _1^\alpha $, $c_{13}^\alpha =-\delta _2^\alpha $,
$c_{23}^\alpha =2\delta _3^\alpha $, $1\leq \alpha \leq 3$. If $n=2$,
then we have
\[
\begin{array}
[c]{rlrl}
\chi _1= & -2R_{12}^2\tfrac{\partial }{\partial R_{12}^1}
+R_{12}^3\tfrac{\partial }{\partial R_{12}^2}, & \chi _2
= & 2R_{12}^1\tfrac{\partial }{\partial R_{12}^1}
-2R_{12}^3\tfrac{\partial }{\partial R_{12}^3},\\
\chi_3= & -R_{12}^1\tfrac{\partial }{\partial R_{12}^2}
+2R_{12}^2
\tfrac{\partial }{\partial R_{12}^3}, & 0= & R_{12}^1\chi_1
+R_{12}^2\chi _2+R_{12}^3\chi_3.
\end{array}
\]
Consequently, the generic rank of $\mathcal{D}$ is $2$ in this case.
If $n\geq 3$, then the components $\chi _1^\prime $, $\chi _2^\prime
$, $\chi _3^\prime $ of the vector fields $\chi _1$, $\chi _2$,
$\chi _3$, respectively, in the subspace generated by
$\frac{\partial }{\partial R_{ij}^\alpha }$, $1\leq \alpha \leq 3$,
$1\leq i<j\leq 3$, are
\begin{eqnarray*}
\chi _1^\prime &=&\!\!\!\!-2R_{12}^2
\tfrac{\partial }{\partial R_{12}^1}
-2R_{13}^2\tfrac{\partial }{\partial R_{13}^1}-2R_{23}^2
\tfrac{\partial }{\partial R_{23}^1}+R_{12}^3
\tfrac{\partial }{\partial R_{12}^2}+R_{13}^3
\tfrac{\partial }{\partial R_{13}^2}+R_{23}^3
\tfrac{\partial }{\partial R_{23}^2}, \\
\chi _2^\prime &=&\!\!\!\!2R_{12}^1
\tfrac{\partial }{\partial R_{12}^1}
+2R_{13}^1\tfrac{\partial }{\partial R_{13}^1}+2R_{23}^1
\tfrac{\partial }{\partial R_{23}^1}-2R_{12}^3
\tfrac{\partial }{\partial R_{12}^3}-2R_{13}^3
\tfrac{\partial }{\partial R_{13}^3}-2R_{23}^3
\tfrac{\partial }{\partial R_{23}^3}, \\
\chi _3^\prime &=&\!\!\!\!-R_{12}^1
\tfrac{\partial }{\partial R_{12}^2}-R_{13}^1
\tfrac{\partial }{\partial R_{13}^2}-R_{23}^1
\tfrac{\partial }{\partial R_{23}^2}+2R_{12}^2
\tfrac{\partial }{\partial R_{12}^3}
+2R_{13}^2\tfrac{\partial }{\partial R_{13}^3}
+2R_{23}^2\tfrac{\partial }{\partial R_{23}^3},
\end{eqnarray*}
or in matrix notation:
\begin{equation*}
\begin{array}{lccccccccc}
& \!\!\!\!\!\tfrac{\partial }{\partial R_{12}^1}
& \!\!\tfrac{\partial }{\partial R_{13}^1}
& \!\!\tfrac{\partial }{\partial R_{23}^1}
& \!\!\tfrac{\partial }{\partial R_{12}^2}
& \!\!\tfrac{\partial }{\partial R_{13}^2}
& \!\!\tfrac{\partial }{\partial R_{23}^2}
& \!\!\tfrac{\partial }{\partial R_{12}^3}
& \!\!\tfrac{\partial }{\partial R_{13}^3}
& \!\!\tfrac{\partial }{\partial R_{23}^3} \\
&  &  &  &  &  &  &  &  &  \\
\chi _1^\prime : & \!\!\!\!\!-2R_{12}^2 & \!\!-2R_{13}^2 &
\!\!-2R_{23}^2 & \!\!R_{12}^3 & \!\!R_{13}^3 & \!\!R_{23}^3 & \!\!0
& \!\!0 & \!\!0 \\
\chi _2^\prime : & \!\!\!\!\!2R_{12}^1 & \!\!2R_{13}^1 &
\!\!2R_{23}^1 & \!\!0 & 0 & \!\!0 & \!\!-2R_{12}^3 & \!\!-2R_{13}^3 &
\!\!-2R_{23}^3 \\
\chi _3^\prime : & \!\!\!\!\!0 & \!\!0 & \!\!0 & \!\!-R_{12}^1 &
\!\!-R_{13}^1 & \!\!-R_{23}^1 & \!\!2R_{12}^2 & \!\!2R_{13}^2 &
\!\!2R_{23}^2
\end{array}
\end{equation*}
and the determinant
\[
\left\vert
\begin{array}
[c]{ccc}
-2R_{12}^2 & -2R_{13}^2 & R_{12}^3\\
2R_{12}^1 & 2R_{13}^1 & 0\\
0 & 0 & -R_{12}^1
\end{array}
\right\vert =4R_{12}^1\left( R_{13}^1R_{12}^2-R_{12}^1R_{13}^2\right)
\]
does not vanish identically. Accordingly, for $n\geq3$ the generic rank
of $\mathcal{D}$ is $3$.
\end{example}
\begin{example}
\label{example3}If $n=3$, $\mathfrak{g}=\mathfrak{so}(3,\mathbb{R})$,
then $m=3$, $l=1$, and by considering the standard basis
$B_1=e_{12}-e_{21}$, $B_2=e_{13}-e_{31}$, $B_3=e_{23}-e_{32}$
we have $X=xB_1+yB_2+zB_{23}$,
$\det(tI_3-X)=t^3+\left( x^2+y^2+z^2\right) t$.
Thus, the basic invariant is $p(X)=x^2+y^2+z^2$. Hence
\[
\begin{array}
[c]{rl}
I_{ii}(X_1,X_2,X_3)= & (x_i)^2+(y_i)^2+(z_i)^2,\\
i= & 1,2,3,\\
I_{12}(X_1,X_2,X_3)= & x_1x_2+y_1y_2+z_1z_2,\\
I_{13}(X_1,X_2,X_3)= & x_1x_3+y_1y_3+z_1z_3,\\
I_{23}(X_1,X_2,X_3)= & x_2x_3+y_2y_3+z_2z_3,
\end{array}
\]
and one can check that these invariants are
functionally independent and therefore build a
system of generators for the ring of (local) invariants.
\end{example}
\begin{example}
\label{example4}If $n=3$, $\mathfrak{g}=\mathfrak{so}(4,\mathbb{R})$,
then $m=6$, $l=2$ and the characteristic polynomial of
$X=\sum_{1\leq i<j\leq 4}x_{ij}(e_{ij}-e_{ji})$ is
\[
\begin{array}
[c]{rl}
\det(tI_{4}-X)= & t^{4}+p_1(X)t^2+p_2(X)^2,\\
p_1(X)= & \sum\nolimits_{1\leq i<j\leq4}(x_{ij})^2,\\
p_2(X)= & x_{12}x_{34}+x_{14}x_{23}-x_{13}x_{24}.
\end{array}
\]
Proceeding as above, by polarizing $p_1$ and $p_2$, we obtain $12$
generically independent invariant functions, which coincides with the corank
of the ditribution $\mathcal{D}$\ in this case.
\end{example}
{\small

}
\end{document}